\documentclass[aps,floats,showpacs,amstex,amssymb,bm,tightenlines,notitlepage,hidelinks,floatfix,superscriptaddress,nofootinbib]{article}

\usepackage{dutchcal}
\usepackage{tcolorbox}
\usepackage{bm}
\usepackage[utf8]{inputenc}
\usepackage[cm]{fullpage}
\usepackage{amsmath}
\usepackage{amsthm}
\usepackage{amsfonts}
\usepackage{bigints}
\usepackage{amscd}
\usepackage{bm}
\usepackage{epsfig}
\usepackage{amssymb}
\usepackage{tabularx}
\usepackage{longtable}
\usepackage{calligra}
\usepackage{enumerate}
\usepackage{hyperref}
\usepackage{tabularx}
\usepackage{multirow}
\usepackage{mathrsfs}
\usepackage{cancel}
\usepackage[normalem]{ulem}
\def\j2{\mathbf{J}^2}
\def\a{\alpha}

\def\L{\mathcal{L}}

\def\p{\partial}

\def\k{\mathcal{k}}

\usepackage{amsmath,amsthm}
\newtheorem{thm}{Theorem}
\newtheorem{coro}{Corollary}[thm]


\newtheorem{lem}{Lemma}

\usepackage{letltxmacro}
\makeatletter
\let\oldr@@t\r@@t
\def\r@@t#1#2{%
\setbox0=\hbox{$\oldr@@t#1{#2\,}$}\dimen0=\ht0
\advance\dimen0-0.2\ht0
\setbox2=\hbox{\vrule height\ht0 depth -\dimen0}%
{\box0\lower0.4pt\box2}}
\LetLtxMacro{\oldsqrt}{\sqrt}
\renewcommand*{\sqrt}[2][\ ]{\oldsqrt[#1]{#2}}
\makeatother

\begin{document}
\title{Black hole regions containing  no trapped surfaces}
\author{Gustavo Dotti \\
Facultad de Matemática, Astronomía, F\'{\i}sica y Computaci\'on (FaMAF), Universidad Nacional de C\' ordoba  \\
and Instituto de F\'{\i}sica  Enrique Gaviola, CONICET. Ciudad Universitaria, (5000) C\'ordoba, Argentina.\\
gdotti@famaf.unc.edu.ar}

\maketitle

\begin{abstract}
A simple criterion is given to rule out the existence of closed trapped surfaces in large open regions 
inside black holes. 
\end{abstract}

\tableofcontents

\section{The importance of closed trapped surfaces}
\label{surfaces}

The drawbacks of the standard definition \cite{onil,waldbook} of a  black hole spacetime $(M,g_{ab})$ and its black hole  region $B$ 
have been extensively discussed. 
The major problem is its global character: 
knowledge of  the \textit{entire} spacetime $(M,g_{ab})$ is required 
to determine if a point belongs to the black hole region  $B \subset M$, 
since $B$ is defined as a set \textit{causally disconnected from far away regions} (technically, $B = M-J^-(\mathscr{I}^+)$, where 
$J^-(\mathscr{I}^+)$ is the causal past of future null infinity $\mathscr{I}^+$ \cite{onil,wald}), 
something that cannot be  tested on a \textit{partial} (that is, extendable) solution   $(M',g_{ab})$ of Einstein's equations.
Numerical codes used to simulate strong gravity processes work  by integrating Einstein's equations  
a spacelike slice at a time. Questions such as if a black hole is \textit{being formed}  lack sense under the above definition of $B$.
In practice, what is done is to search for  \textit{closed trapped 
surfaces} in every newly generated time slice $\Sigma$  \cite{Andersson}. Since these surfaces can only exist inside $B$ 
(proposition 12.2.2 in \cite{waldbook}),  the boundary of the subset  of $\Sigma$ 
containing closed trapped surfaces (which is, under certain conditions, a \textit{marginally outer trapped surface} \cite{Andersson})
 is a proxy  for  the intersection with $\Sigma$  of the event horizon $H_e= \p B$. 
The slicing dependence, stability and dynamics of these quasi-locally defined horizons,  
which  lie within $H_e$, is
a subtle issue discussed, e.g., in \cite{Andersson} and more recently in \cite{Booth:2021sow,Pook-Kolb:2021jpd} (see also 
\cite{Senovilla:2023npd} for the the case of a positive cosmological constant).  

Although  closed trapped surfaces are \textit{the} black hole signature and, being a quasi-local concept, 
offer a sensible approach to the issue of searching  the black hole boundary 
$\p B$, 
it has been known for a long time that there are large open regions in $B$  admitting no such surfaces.
 This fact was first pointed out in 
\cite{wald}, where a Cauchy slicing  of the Kruskal manifold  was constructed which, in spite of getting 
 arbitrarily close to the $r=0$ singularity,  
eludes all closed trapped surfaces. Not only there are no closed trapped surfaces  contained in any of its slices, but also 
the causal past of any of these Cauchy hypersurfaces $\Sigma$ contains no such a  surface, no matter how close is 
$\Sigma$ from the singularity. 
A numerical relativist unfortunate enough to pick such a slicing would proceed unaware of the fact that there is a black hole region. 
The proof  in \cite{wald} is based on the fact that there are no closed trapped surfaces 
in the intersection $J^-(\gamma) \cap X_{Sch}$, where $\gamma$ is  a particular timelike curve 
reaching the $r=0$ singularity and $X_{Sch}$ the 
black hole open subset defined using standard Schwarzschild coordinates $(t,r,\theta,\phi)$ as 
\begin{equation}\label{xset}
X_{Sch} = \left\{ (t,r,\theta,\phi) \; | \; r<2M, \cot(\theta) >\frac{M-r}{ \sqrt{r (2M-r)}} \right\}. 
\end{equation}
Inspection of the proof 
in \cite{wald} reveals that, in fact, there are no closed trapped surfaces contained in the \textit{entire} set $X_{Sch}$.  
A  closed trapped surface $S$ might \textit{enter} $X_{Sch}$, that is, it is possible that $S \cap X_{Sch}$ be non empty, 
but it is impossible that $S \subset X_{Schw}$. 

In this paper we prove some  results (Theorem \ref{main} and its corollaries in Section \ref{ctrots}) that 
 allow to anticipate obstructions such as the impossibility of finding closed trapped surfaces in (\ref{xset}). 
 We  give a tool that allows to find 
sets like (\ref{xset})  in an arbitrary spacetime $M$ (see corollaries \ref{coro1} and  \ref{coro2} in section \ref{ctrots}):\\

\textit{ Assume that  there is a function $g: M \to \mathbb{R}$ such that $\nabla^a g$ is future null on a domain $D \subset M$. 
Define $X_g \subset D$ as the subset of $D$ where $\Box  g >0$. If $S$ is a closed trapped or marginally trapped surface,  
 it is not possible that $S \subset X_g$. Moreover, if $S \cap X_g \neq \emptyset$, $g|_S$ cannot attain a local 
maximum within $X_g$.} \\

To make this paper self-contained we introduce all basic concepts in the following section, stressing 
the relation that exists between \textit{surfaces} (here defined as codimension two, spacelike submanifolds) and null hypersurfaces, 
in particular, the fact that a null hypersurface is a  bundle of  null geodesics orthogonal to a surface. 
In section \ref{ctrots}  the 
  main results are proven. An appendix serves as a quick reference for the notation used through the text. 
Among the examples and applications given in section \ref{apps} we find large open subsets in the region between the horizons of 
a  Kerr black hole where there are no 
closed trapped surfaces.

\section{Null hypersurfaces and their spacelike sections} \label{concepts}

Our discussion does not get more involved in arbitrary dimensions, so we proceed by defining 
 a \textit{spacetime} $(M,g_{ab})$ as  an oriented $n+1$ dimensional Lorentzian manifold ($n \geq 2$, mostly plus signature convention) 
 which is 
time oriented. 
A hypersurface is a smooth embedded submanifold of $M$ of dimension $n$. Locally, it can be given as a level set  
of a smooth function $\phi:M \to \mathbb{R}$. 
The hypersurface $N$ is null if there is a future null vector field $l^a$ on $N$ such that, for every $p \in N$,  
$T_pN = \{ X^a \in T_pM \, | \; X^a l_a|_p =0 \}$. In particular, since $l^a l_a =0$, 
$l^a$ is both orthogonal and tangent to $N$. The restriction of the metric 
to $T_pN \otimes T_pN$ is a degenerate symmetric tensor of signature $(0++...+)$, that is,  
the space of null vectors in $T_p N$ is one dimensional, 
 the unique null direction being that of $l^a$.  If $X^a \in T_pN$, then $X^aX_a \geq 0$, with equality 
holding only 
if $X^a \propto l^a |_p$. 
If properly scaled, $l^a$  will satisfy the  affine geodesic equation. To prove this, let  $O \subset M$ be an open set such that 
$N \cap O$ 
is a level set of a function $\phi: O \to \mathbb{R}$. In $N \cap O$,  $l^a \propto \nabla^a \phi$, in particular, $\nabla^a \phi \nabla_a 
\phi |_{N}=0$. Given 
    any vector $X^a \in T_p N$, 
$X^a (\nabla^b \phi)\nabla_b (\nabla_a \phi) = \tfrac{1}{2} X^a\nabla_a (\nabla^b \phi \nabla_b \phi)=0$, which means that 
$ (\nabla^b \phi)\nabla_b (\nabla^a \phi)$ is normal to $T_p N$, then parallel to $\nabla^a \phi$, that is 
$ (\nabla^b \phi)\nabla_b (\nabla^a \phi) \propto \nabla^a \phi$ at points of $N$. 
From now on, 
we will assume that the future null vector field 
$l^a$  satisfies the affine geodesic equation. $N$ can then be regarded as a bundle of null geodesics: the integral lines of $l^a$. 
These null geodesics are called \textit{generators } of $N$.  The affine parametrization is non unique, if $l^a$ is affine then so is 
$\tilde l^a = f l^a$, where $f:N \to \mathbb{R}^+$ is constant on generators. 

\begin{lem} \label{gaf} If a  function $g:= O \to \mathbb{R}$ is such that 
 \emph{all} its level sets  are null hypersurfaces,  then  the null vector field $\nabla^a g$ satisfies the \textit{affine} geodesic equation 
 (no point-wise re-scaling required).
 \end{lem}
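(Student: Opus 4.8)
The plan is to exploit the precise difference between the hypothesis of this lemma and the single-hypersurface computation carried out just above it. There, the identity $\nabla^b\phi\,\nabla_b\phi=0$ was available only \emph{on} the null level set $N$, so only the derivatives of this quantity \emph{tangent} to $N$ could be set to zero; that is exactly why the conclusion obtained was the weaker statement that $(\nabla^b\phi)\nabla_b(\nabla^a\phi)$ is merely \emph{proportional} to $\nabla^a\phi$, i.e. the generators are geodesics but the parametrisation need not be affine. The crucial new ingredient when \emph{every} level set of $g$ is null is that the scalar $\nabla^a g\,\nabla_a g$ now vanishes at every point of $O$, and not just along one hypersurface.

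First I would note that, since every $p\in O$ lies on the level set $\{g=g(p)\}$, which is null by assumption, the function $h:=\nabla^a g\,\nabla_a g$ is identically zero throughout the open set $O$. Consequently its full gradient vanishes, $\nabla_b h=0$ on $O$; in particular the derivatives of $h$ \emph{transverse} to the level sets now vanish as well, which was not available in the single-hypersurface case.

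Next I would expand this gradient. Using the symmetry of the Hessian of a scalar (the Levi-Civita connection is torsion-free) together with metric compatibility,
\[
0=\nabla_b h=\nabla_b\!\left(\nabla^a g\,\nabla_a g\right)=2\,(\nabla^a g)\,\nabla_b\nabla_a g=2\,(\nabla^a g)\,\nabla_a\nabla_b g .
\]
Raising the free index $b$ gives $(\nabla^a g)\,\nabla_a\nabla^b g=0$, which is precisely the affine geodesic equation $l^a\nabla_a l^b=0$ for the null field $l^a=\nabla^a g$, with no point-wise rescaling required.

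I do not expect a genuine obstacle here. The only point deserving care is the logical step in the second paragraph: recognising that the hypothesis ``all level sets are null'' upgrades the constraint $\nabla^a g\,\nabla_a g=0$ from a condition holding on a single hypersurface to one holding on the whole open set $O$, so that the \emph{entire} gradient of $\nabla^a g\,\nabla_a g$, and not merely its component tangent to the level sets, is forced to vanish. Everything after that is the one-line index manipulation displayed above.
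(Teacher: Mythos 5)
Your proof is correct and is essentially identical to the paper's one-line argument in equation (\ref{age}): both rest on the observation that $\nabla^c g\,\nabla_c g$ vanishes on all of $O$ (not just on one level set), and both use the symmetry of the Hessian to rewrite $(\nabla^c g)\nabla_c\nabla^a g$ as $\tfrac{1}{2}\nabla^a(\nabla^c g\,\nabla_c g)=0$. You have merely written the same computation in the opposite order, starting from the vanishing gradient of the norm rather than ending there.
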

 \begin{proof}
 \begin{equation}\label{age}
(\nabla^c g) \nabla_c \nabla^a g = (\nabla^c g) \nabla^a (\nabla_c g) = \frac{1}{2} \nabla^a(\nabla^c g \nabla_c g)=0.
\end{equation}
\end{proof}

Given \textit{any} set of $n-1$ linearly independent  \textit{spacelike} vectors $\{ e_1^a,..., e_{n-1}^a \}$ at $p \in N$
and a choice of affine $l^a$, we  define the divergence $\Theta_{N_l}$ of $N$ at $p$ as 
(we use letters from the beginning of the alphabet as tensor indices, from the middle of the alphabet to number basis elements)
\begin{equation}\label{tita}
\Theta_{N_l}(p)=  g_{ac} \, h^{ij} \,  e^c_i\,  e^b_j \, \nabla_b l^a,
\end{equation}
 where $h^{ij}$ is the inverse of the matrix $h_{kl}= g_{ab} e_k^a e_l^b$ (that is, $h^{ij}h_{jk}=\delta^i_k$). 
 Note that,  although  the vector field  $l^a$ is defined  only on $N$, $\Theta_{N_l}(p)$ is well defined because 
 all directional derivatives in (\ref{tita}) are along directions tangent to $N$. Less obvious is the fact that 
 the right hand side of (\ref{tita}) is independent of the chosen $n-1$ linearly independent spacelike vectors $e^a_j$, 
 even if they span  different spacelike subspaces of $T_pN$. To prove this note that any alternative set 
 $\tilde e_i^a \in T_pN$, $i=1,2,...,n-1$ of linearly independent spacelike vectors satisfies $\tilde e_i^a = R^k_i e_k^a + \a_i l^a$ 
 with $R^k_i \in GL(n-1,\mathbb{R})$, then $\tilde h_{ij} =  g_{ab} \tilde e_i^a \tilde e_j^b = g_{ab} (R^k_i e^a_k+ \a_i l^a) (R^l_j e^b_l + \a_j l^b) 
=g_{ab} e^a_k e^b_l R^k_i R^l_j = h_{kl} R^k_i R^l_j$. In matrix notation (left or upper indices label file), this reads $\tilde{\bm{h}}
= \bm{R}^T \bm{h} \bm{R}$, and implies $\tilde{\bm{h}}^{-1}=  \bm{R}^{-1} \bm{h}^{-1}{\bm{R}^T}^{-1}$. As a consequence
 \begin{equation}
\begin{split}
\tilde h^{ij} \tilde e_i^c \tilde e_j^b \nabla_b l_c &= (\bm{R}^{-1} \bm{h}^{-1}{\bm{R}^T}^{-1})^{ij} (R^k_i e_k^c+ \a_i l^c) 
 (R^l_j e_l^b+ \a_j l^b) \nabla_b l_c \\
 &=  (\bm{R}^{-1} \bm{h}^{-1}{\bm{R}^T}^{-1})^{ij} R^k_i e_k^c  R^l_j e_l^b \nabla_b l_c\\
 &= h^{kl} e_k^c e_l^b \nabla_b l_c.
 \end{split}
 \end{equation}
  Conventionally \cite{onil}, one picks an orthonormal basis, 
 then  (\ref{tita}) simplifies to $\sum_{i=1}^{n-1} g_{ac}\,  e^c_i\,  e^b_i \, \nabla_b l^a$.  
Note that  if we re-scale 
\begin{equation}\label{rs1}
l^a \to \tilde l^a = fl^a,
\end{equation}
$f: N \to \mathbb{R}^+$ constant on generators (so that $\tilde l^a$ is 
 also affine geodesic and future pointing), then 
 \begin{equation} \label{rs2}
 \Theta_{N_l} \to  \Theta_{N_{\tilde l}} = f \Theta_{N_l}.
 \end{equation}
  In particular, we may say without ambiguities that 
 $N$ (its generators) diverges (diverge) towards the future if $\Theta_{N_l} >0$. We will similarly say that $N$ converges towards the future if 
 $\Theta_{n_l}<0$. As an example, the black hole event horizon $H_e =\p B$ of a spacetime $M$ satisfying certain energy conditions  is future non-convergent 
  (Hawking area theorem). \\

 \begin{lem} \label{divL} Let $\L^a$ be \emph{any} null vector field that extends the chosen affine tangent $l^a$ of the generators of a null hypersurface $N$ 
 to an open neighborhood $O \subset M$ of   $p \in N$, then, 
 in $O \cap N$,
 \begin{equation} \label{div}
 \Theta_{N_l} = \nabla_a \L^a 
 \end{equation}
 \end{lem}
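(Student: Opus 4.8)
The plan is to evaluate the ordinary spacetime divergence $\nabla_a\L^a$ at $p$ in a null frame adapted to $N$ and to recognise $\Theta_{N_l}$ among the terms that survive. First I would take the $n-1$ spacelike vectors $e_i^a\in T_pN$ appearing in (\ref{tita}) to be orthonormal — legitimate, since (\ref{tita}) was shown to be independent of this choice — so that $h_{ij}=\delta_{ij}$, and complete the set $\{l^a,e_1^a,\dots,e_{n-1}^a\}$ spanning the $n$-dimensional space $T_pN$ to a basis of $T_pM$ by a single transverse future null vector $k^a$ fixed by $g_{ab}k^ak^b=0$, $g_{ab}k^ae_i^b=0$ and $g_{ab}l^ak^b=-1$. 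In this frame the inverse metric decomposes at $p$ as
\begin{equation}
g^{ab} = -l^ak^b - k^al^b + \sum_{i=1}^{n-1} e_i^a e_i^b ,
\end{equation}
as one verifies by contracting with $l_a$, $k_a$ and $(e_j)_a$.

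Next I would write $\nabla_a\L^a = g^{ab}\nabla_a\L_b$, insert this decomposition, and split the outcome into a spacelike part $\sum_i e_i^a e_i^b\nabla_a\L_b$ and two null cross terms $-l^ak^b\nabla_a\L_b$ and $-k^al^b\nabla_a\L_b$. The spacelike part reproduces $\Theta_{N_l}$ exactly: each $e_i^a\nabla_a$ differentiates only along directions tangent to $N$, where $\L^a=l^a$, so $\L_b$ may be replaced by $l_b$, yielding the orthonormal form of (\ref{tita}). Hence the entire lemma reduces to showing that the two cross terms vanish at $p$.

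Establishing the vanishing of these cross terms is the heart of the argument, and the two invoke different hypotheses. In $-l^ak^b\nabla_a\L_b$ the factor $l^a\nabla_a\L_b$ is a derivative along the generator through $p$; since that generator lies in $N$ and $\L^a$ restricts there to the affine tangent $l^a$, this factor equals $l^a\nabla_a l_b=0$ by the affine geodesic equation. In $-k^al^b\nabla_a\L_b$ the factor $l^b\nabla_a\L_b$ is instead a \emph{transverse} derivative (in the $k^a$ direction), so the mere restriction $\L^a=l^a$ on $N$ does not control it. Here I would use that $\L^a$ is null on \emph{all} of $O$: then $0=\nabla_a(\L^b\L_b)=2\,\L^b\nabla_a\L_b$ identically on $O$, and evaluating at $p$, where $\L^b=l^b$, gives $l^b\nabla_a\L_b=0$ for every $a$.

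With both cross terms removed, what remains is $\nabla_a\L^a=\sum_i e_i^ae_i^b\nabla_a\L_b=\Theta_{N_l}$ at $p$, as claimed. The main subtlety to flag is the second cross term: it is the only place where global nullity of the extension $\L^a$ throughout $O$ — rather than its mere agreement with $l^a$ on $N$ — is needed, and dropping that hypothesis would leave an uncontrolled $l^b\nabla_a\L_b$ contribution.
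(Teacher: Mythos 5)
Your proof is correct and follows essentially the same route as the paper's: both decompose $g^{ab}$ at $p$ into the tangential block plus the two null cross terms $-l^ak^b-k^al^b$, kill the term with the derivative along the generator via the affine geodesic equation, and kill the transverse term by differentiating the identity $\L^a\L_a=0$ valid throughout $O$. The only cosmetic difference is that you fix an orthonormal set $\{e_i^a\}$ while the paper keeps a general $h^{ij}$; your closing remark correctly identifies where the global nullity of the extension is indispensable.
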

 \begin{proof}
 At $p \in N$ complete the set $\{e_1^a,...,e_{n-1}^a\}$ used in (\ref{tita}) to a basis $\{e_1^a,...,e_{n-1}^a, \ell^a, \mathcal{k}^a \}$, where $\ell^a=l^a|_p$ 
 and $\mathcal{k}^a$ is the only future oriented null vector  orthogonal to the $e_j^a$ and satisfying $\mathcal{k}^a \ell_a=-1$. 
 Note that $\mathcal{k}^a$ points out of $N$ and that $g^{ab}|_p= h^{ij} e_i^a e_j^b - \mathcal{k}^a \ell^b - \ell^a \mathcal{k}^b$, then 
 \begin{equation}\label{div0}
 \begin{split}
\nabla^a \L_a  &= g^{ab} \nabla_a \L_b \\&= h^{ij} e_i^a e_j^b \nabla_a \L_b - \mathcal{k}^a \ell^b \nabla_a \L_b -
\ell^a \mathcal{k}^b \nabla_a \L_b \\
&=  g_{bc} h^{ij} e_i^a e_j^b \nabla_a \L^c - \mathcal{k}^a \L^b \nabla_a \L_b - \mathcal{k}^b l^a \nabla_a l_b \\
&=  g_{bc} h^{ij} e_i^a e_j^b \nabla_a l^c - \tfrac{1}{2} \mathcal{k}^a \nabla_a (\L^b \L_b)  \\ &= \Theta_{N_l}(p),
\end{split}
\end{equation}
where we used that $l^a$ is affine geodesic and its extension $\L^a$ away from $N$ is a null vector field.
\end{proof}
The fact that  $\Theta_{N_l}$ can be calculated by taking the divergence of \textit{any} null vector field  $\L^a$ that extends $l^a$ away from $N$ 
is (sometimes) convenient to perform explicit calculations. In the particular case where $N$ is a level set of a function $g$ all whose level sets are null, 
$\nabla^a g$ is null and we may assume (flipping sign if necessary), that is future pointing. 
If we choose $l^a=\nabla^a g |_N$ (equation (\ref{age})),  an obvious null extension of $l^a$ (which is  also geodesic everywhere) 
is $\L^a = \nabla^a g$. Applying the previous lemma to this case gives:

\begin{lem} \label{divL2} 
If $N$ is a level set of a function $g$ whose gradient $\nabla^a g $ is future null everywhere, and we choose $l^a = \nabla^a g|_N$, 
then
\begin{equation}\label{Box}
\Theta_{N_l} = \nabla_a \nabla^a g |_N =: \Box g |_N 
\end{equation}
\end{lem}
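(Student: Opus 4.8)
The plan is to recognize this as a direct corollary of Lemma \ref{divL}, obtained by choosing the most natural null extension available, namely the global gradient $\nabla^a g$ itself. Before invoking that lemma I would first check that its hypotheses are genuinely met. The assumption that $\nabla^a g$ is future null \emph{everywhere} means $\nabla^a g \, \nabla_a g = 0$ holds identically, so every level set of $g$ has null normal and is therefore a null hypersurface. This is precisely the hypothesis of Lemma \ref{gaf}, which then guarantees that $\nabla^a g$ satisfies the affine geodesic equation. Consequently $l^a := \nabla^a g|_N$ is an admissible affine tangent to the generators of $N$, and no pointwise rescaling is needed.

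The central observation is the identification of the extension field. Since $g$ is defined on an open neighborhood $O$, its gradient $\nabla^a g$ is a vector field on all of $O$ that restricts to $l^a$ on $N$. Crucially, the null condition $\nabla^a g \, \nabla_a g = 0$ is assumed to hold throughout $O$ and not merely on $N$, so $\L^a := \nabla^a g$ is a bona fide \emph{null} vector field extending $l^a$ off the hypersurface. This is exactly the type of extension required in the statement of Lemma \ref{divL}, whose proof invokes both that $l^a$ is affine geodesic on $N$ and that $\L^a$ is null in the neighborhood; both properties have just been secured.

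With the extension in hand, I would simply apply Lemma \ref{divL} and read off the result:
\begin{equation}
\Theta_{N_l} = \nabla_a \L^a = \nabla_a \nabla^a g \big|_N = \Box g \big|_N .
\end{equation}
I do not anticipate any substantive obstacle here; the content of the lemma lies entirely in the preparatory results. The only point that merits a moment's care is ensuring that the null property is a \emph{neighborhood} statement rather than one holding merely on $N$ — it is this strengthening, encoded in the phrase ``$\nabla^a g$ is future null everywhere,'' that promotes the global gradient to a legitimate null extension and thereby lets us compute the divergence as the ordinary d'Alembertian $\Box g$ with no rescaling.
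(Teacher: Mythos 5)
Your proposal is correct and follows exactly the route the paper takes: it invokes Lemma \ref{gaf} to certify that $l^a=\nabla^a g|_N$ is affine, then applies Lemma \ref{divL} with the globally null extension $\L^a=\nabla^a g$ to identify $\Theta_{N_l}$ with $\nabla_a\nabla^a g|_N$. No gaps.
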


Of course, we might have normalized  the tangent to the $N$ generators as $\tilde l^a= f \nabla^a g |_N$ 
with $f$ a positive function constant on generators. In that case, a combination of the re-scaling equations (\ref{rs1}) and (\ref{rs2}) with 
equation (\ref{Box})  gives
\begin{equation} \label{gg}
\Theta_{N_{\tilde l}} = f \, \Box g |_N
\end{equation}

By a  \textit{surface} in $M$ we mean  a codimension two spacelike submanifold $S$. At every point of $S$ 
the orthogonal to  the tangent space  has induced metric of signature $(-,+)$, so (at least locally) we can define 
 two future  null vector fields  $\ell^a$ and $\mathcal{k}^a$ on the normal bundle of $S$ satisfying  $\mathcal{k}^a \ell_a=-1$. 
These are unique up to flipping and  re-scaling $\ell^a \to \lambda \ell^a$, $\mathcal{k}^a \to \lambda^{-1} \mathcal{k}^a$, $\lambda:  
S \to \mathbb{R}^+$. 
In some contexts it makes sense to call one of these future null vector fields \textit{outgoing} and the other  \textit{ingoing}. 
If we integrate the geodesic equation with initial condition $\ell^a$ from every point of $S$ and take the union of these null geodesics, 
we get (at least, near $S$) a null hypersurface $N^S_l$  with affine generators $l^a$  satisfying 
$l^a|_S=\ell^a$, of which  $S$ is a \textit{proper transverse 
 section}, that is,  
an $n-1$ dimensional spacelike submanifold. $N^S_k$ and $k^a$ are defined analogously using $\mathcal{k}^a$. Note that \textit{any} 
 null hypersurface  locally agrees with 
$N^S_l$ [$N^S_k$] for some proper transverse section $S$.  On $S$ we define 
\begin{equation} \label{tls}
\Theta_\ell^S := \Theta_{N^S_l} \mid_S
\end{equation}
and similarly $\Theta_{\k}^S := \Theta_{N^S_k}\mid_S$. 
 The \textit{mean curvature vector field} on $S$ (here defined  following  the overall sign and normalization  conventions  in \cite{barriers}) is 
\begin{equation}
H^b = -h^{ij}  \left( e_i^a \nabla_a e_j^b \right)^\perp,
\end{equation}
where $\perp$ means  the component normal to $S$. The definition is independent of the local basis of vector fields $\{ e_i^a, i=1,2,...,n-1 \}$ on $TS$. 
Note that 
\begin{equation}
H_b \ell^b = -h^{ij}  \left( e_i^a \nabla_a e_j^b \right) \ell_b = h^{ij} e_i^a e_j^b (\nabla_a \ell_b) = \Theta_{\ell}^S
\end{equation}
and similarly $H_b \k^b =\Theta_{\k}^S$, then 
\begin{equation}\label{Htita}
H^a = h^{ij}  \left( e_i^a \nabla_a e_j^b \right) (\ell^a \k_b + \k^a \ell_b) = -\Theta_{\ell}^S \k^a -\Theta_{\k}^S \ell^a,
\end{equation}
For a more direct  and natural definition of the mean curvature vector field of arbitrary codimension 
semi-Riemannian submanifolds of 
a semi-Riemannian manifold see \cite{barriers,onil}.\\

\noindent
\underline{Example}: 
\textit{Take $M= \mathbb{R}^{2+1}$, three dimensional Minkowski spacetime, with metric 
$ds^2=-dt^2+dx^2+dy^2$.
Let $a>b>0$ and consider the ellipse $S$ defined by 
\begin{equation}\label{ellipse}
(t,x,y)=(0,a \cos \theta,b \sin \theta), \;\;\theta \in [0,2\pi].
\end{equation}
At every point of $S$ we can determine the two future null directions, there 
is a sensible notion of ``outgoing'' and ``ingoing'', for which we may take respectively 
the vector fields (normalized as above) 
\begin{equation}\label{lnS}
\begin{split}
\ell (\theta) &=    \p_t +  K(\theta)^{-1}[b \cos \theta \; \p_x + a \sin \theta \; \p_y ]\\
\k (\theta) &=  \tfrac{1}{2}  \p_t - \tfrac{1}{2}  K(\theta)^{-1} [b \cos \theta \; \p_x - a \sin \theta \; \p_y ]
\end{split}
\end{equation}
where $K(\theta)= \sqrt{a^2\sin^2 \theta + b^2 \cos^2 \theta}$. \\
 The null surface $N_l^S$ is defined as the bundle of outgoing null geodesics with initial condition $\ell(\theta)$. 
It can be  parametrized 
using $\theta$ and an affine parameter $s$ along the geodesics:
\begin{equation}\label{exNl}
(\theta,s) \to  (t=s,x= (a+sb/K(\theta)) \cos \theta,y= (b+sa/K(\theta)) \sin \theta)
\end{equation}
The expression for the geodesic field $l^a(s,\theta)$ on $N_l^S$ is independent of $s$ and 
agrees with the right hand side of the first equation (\ref{lnS}), 
and similarly for $k^a(s,\theta)$ on $N_k^S$. On $S$, a unit tangent vector field is 
\begin{equation}
e_1^a(\theta)= -(a/K(\theta)) \sin \theta \; \p_x + (b/K(\theta)) \cos \theta \; \p_y
\end{equation}
Using the definition (\ref{tls}) and (\ref{tita}) we can calculate $\Theta_\ell^S= e_{1a} e_1^b\nabla_b \ell^a$ 
as a covariant derivative in $M=\mathbb{R}^{2+1}$. To this end we may use   \emph{any} extension 
of $\ell^a(\theta)$ and $e_1^a(\theta)$, around $S$ (e.g.,   $K(x,y)^{-1} [(b/a) x \, \p_x + (a/b) y \, \p_y] +\, \p_t$ and 
$K(x,y)^{-1} [(b/a) x \, \p_y -(a/b)y \, \p_x]$, where $K(x,y)= \sqrt{ (b/a)^2 x^2 + (a/b)^2 y^2}$), and then evaluate on $S$. 
The result is 
\begin{equation}\label{tital}
\Theta_{\ell}^S= \frac{ab}{(a^2 \sin^2 \theta + b^2 \cos^2 \theta )^{3/2}}.
\end{equation}
We similarly obtain $\Theta_\k^S= -\tfrac{1}{2} \Theta_\ell^S$, then 
\begin{equation}\label{HS} 
H = \frac{a b^2 \cos \theta}{a^2 \sin^2 \theta+ b^2 \cos ^2 \theta} \p_x + \frac{a^2 b \sin \theta}{a^2 \sin^2 \theta+ b^2 \cos ^2 \theta} \p_y
\end{equation}
\hspace*{9cm} \hrulefill \vspace{1cm}
}

A surface $S$ satisfies the  \textit{trapping condition (marginal trapping condition)} at $p \in S$ if the divergences 
 $\Theta^S_\ell(p)$ and $\Theta^S_\k(p)$ 
are both negative (non-positive).
Other related concepts turn out to be useful, particularly that of 
marginally \textit{outer} trapped ($\Theta_\ell^S(p)=0$ and no condition  on $\Theta_\k^S(p)$, in contexts where 
there is a notion of $\ell^a$ being outer pointing, see \cite{Andersson}). 
If such condition is satisfied at every point of $S$ we say that the surface is trapped, marginally trapped, etc.  
Note from (\ref{Htita}) that  these conditions can be re-stated as $H^a$ being timelike, causal, along $\ell^a$, etc.

By a \textit{closed manifold} we mean, as usual,  an ordinary manifold (that is, without boundary)  which is compact. 
 The relevance of the mean curvature vector 
field on a closed surface $S$ comes from the following fact \cite{onil}: 
if $\zeta^a$ is any vector field on $M$, $S_t$ the image of $S$ under the flow $\Phi_t: M \to M$ 
of this vector field and $A(S_t)$ the area of $S_t$ (that is, its $n-1$ volume, which is finite since $S$, and then $S_t$ for small enough $t$, 
are compact), 
then 
\begin{equation}\label{evo}
\left. \frac{d A(S_t)}{dt} \right|_{t=0} = \int_S H^a \zeta_a dS.
\end{equation}
Closed trapped surfaces are 
codimension two spacelike closed manifolds that satisfy the trapping condition at every point. 
From (\ref{evo}) follows that for $S$  closed trapped, since $H^a$ is a timelike future vector at every point of $S$, 
the area shrinks under the flow of \textit{any} future causal vector field $\zeta^a$. If a spacetime 
$M$ contains a black hole $B$ and $S$ is a  closed trapped surface, then  $S \subset B$ 
(proposition 12.2.2 in \cite{waldbook}). 
The non trivial character of closed trapped surfaces is best exemplified by the existence of 
 closed trapped surfaces entering flat regions of $B$, as constructed  explicitly, e.g., 
in \cite{Bengtsson}.

\section{Criteria to rule out closed trapped surfaces}\label{ctrots}

We address now the problem of finding  sets such as (\ref{xset}),  within which 
closed trapped surfaces are not allowed. 

\begin{thm} \label{main} Assume there is a $C^2$ function $g: M \to \mathbb{R}$ such that $\nabla^a g$ is future null 
on a domain $D \subset M$, $S \subset M$ is a spacelike surface intersecting $D$ and $p \in S$ is a critical point of the restriction 
$g|_S$.
\begin{enumerate}[i)]
\item  $\nabla^a g |_p$ is orthogonal to $S$.
\item  If we choose the future null field $\ell^a$ on $S$ such that $\ell^a |_p=\nabla^a g|_p$ then 
\begin{equation}\label{se}
\Theta^S_{\ell} |_p = \Box g |_p - \Delta_S g |_p , 
\end{equation}
where $\Delta_S$ is the Laplacian of $S$.
\end{enumerate}
\end{thm}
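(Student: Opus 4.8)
The plan is to dispatch (i) directly from the definition of a critical point and to reduce (ii) to a Gauss-type decomposition of the ambient Hessian $\nabla_a\nabla_b g$ restricted to $S$. For (i): saying that $p$ is a critical point of $g|_S$ means $X^a\nabla_a g|_p=0$ for every $X^a\in T_pS$, i.e. the covector $\nabla_a g|_p$ annihilates $T_pS$, so $\nabla^a g|_p$ is orthogonal to $S$. Since $\nabla^a g$ is future null on $D$ and $p\in S\cap D$, this orthogonal vector is one of the two future null normals to $S$ at $p$, and I may fix the scaling freedom $\ell^a\to\lambda\ell^a$ by setting $\ell^a|_p=\nabla^a g|_p$, as in the statement.

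For (ii) I would work pointwise at $p$ with an orthonormal basis $\{e_i^a\}_{i=1}^{n-1}$ of $T_pS$ and the null normals $\ell^a=\nabla^a g|_p$, $\mathcal{k}^a$ normalized by $\mathcal{k}^a\ell_a=-1$, so that (as in the proof of Lemma \ref{divL}) $g^{ab}|_p=\sum_i e_i^a e_i^b-\ell^a\mathcal{k}^b-\mathcal{k}^a\ell^b$. Contracting the symmetric Hessian gives $\Box g|_p=\sum_i e_i^a e_i^b\nabla_a\nabla_b g-2\,\ell^a\mathcal{k}^b\nabla_a\nabla_b g$. The cross term vanishes: because $\nabla^a g$ is future null throughout $D$, its level sets there are null hypersurfaces, so Lemma \ref{gaf} (eq. (\ref{age})) yields the affine geodesic identity $(\nabla^a g)\nabla_a\nabla_b g=0$ on $D$; at $p$ one has $\ell^a=\nabla^a g|_p$, whence $\ell^a\nabla_a\nabla_b g|_p=0$. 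This leaves $\Box g|_p=\sum_i e_i^a e_i^b\nabla_a\nabla_b g|_p$, the trace over $T_pS$ of the ambient Hessian.

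The remaining, and main, step is to recognize this tangential trace as the intrinsic Laplacian of $g|_S$ plus a mean-curvature correction. Splitting $e_i^a\nabla_a e_j^b$ into its tangential part (the Levi-Civita connection of the induced metric) and its normal part (the second fundamental form) turns $\sum_i e_i^a e_i^b\nabla_a\nabla_b g$ into the intrinsic Hessian trace $\Delta_S g$ plus $-h^{ij}(e_i^a\nabla_a e_j^b)^\perp\nabla_b g=H^b\nabla_b g$, with $H^b$ exactly the mean curvature vector of the excerpt. Evaluated at $p$, where $\nabla_b g=\ell_b$ is purely normal, the identity $H_b\ell^b=\Theta^S_\ell$ recorded just before (\ref{Htita}) gives $H^b\nabla_b g|_p=\Theta^S_\ell|_p$. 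Combining, $\Box g|_p=\Delta_S g|_p+\Theta^S_\ell|_p$, which is (\ref{se}).

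I expect the Gauss-type decomposition to be the only delicate point, for two reasons: one must carry the paper's sign and normalization convention for $H^a$ through unchanged, and one must use crucially that at a critical point $\nabla g|_p$ has no tangential component, so that only the normal (second fundamental form) part of $\nabla_{e_i}e_j$ contributes and reproduces precisely the $\ell$-divergence $\Theta^S_\ell$ rather than some scaling- or hypersurface-dependent quantity. By contrast, the vanishing of the cross terms is an immediate consequence of the affine geodesic property of $\nabla^a g$ established in Lemma \ref{gaf}; it is also what forbids the naive shortcut of applying Lemma \ref{divL2} directly, since $S$ need not be a transverse section of the level set $\{g=g(p)\}$, and the $-\Delta_S g$ term is exactly the correction measuring how $S$ curves away from that level set while sharing its null normal at $p$.
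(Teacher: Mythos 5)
Your proof is correct, but it takes a genuinely different route from the paper's. The paper extends the adapted frame $\{e_i^a,\ell^a,\mathcal{k}^a\}$ to a full neighborhood of $p$ in $M$, solves the decomposition $\nabla^a g = Z^a - g_K L^a - g_L K^a$ for $L^a$, applies the tangential second-order operator $E_{ia}E_i^b\nabla_b$ to that expression, and evaluates at $p$, where the critical-point condition forces $g_K(p)=-1$ and $g_L(p)=0$; the Hessian-trace term in the result is identified with $\Box g|_p$ via Lemma \ref{divL}, and the $Z^a$ term with $\Delta_S g|_p$. You instead decompose $\Box g|_p$ directly with the adapted inverse metric at $p$, kill the cross terms using the geodesic identity of Lemma \ref{gaf} (this step is precisely the content of Lemma \ref{divL} specialized to $\mathcal{L}^a=\nabla^a g$, so the two arguments coincide on that piece), and then invoke the standard submanifold identity $\sum_i e_i^a e_i^b\nabla_a\nabla_b g = \Delta_S g + H^b\nabla_b g$ together with $H_b\ell^b=\Theta^S_\ell$ from Section \ref{concepts}. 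Your version is shorter and makes the geometric content of (\ref{se}) transparent --- the $-\Delta_S g$ term is literally the gap between the ambient Hessian trace over $T_pS$ and the intrinsic Laplacian, bridged by the mean curvature vector --- at the price of importing the Gauss decomposition, which the paper's frame computation avoids. Two small points to tighten: the orthonormal basis of $T_pS$ must be extended to a local orthonormal frame of vector fields on $S$ before $e_i(e_i(g))$ and $e_i^a\nabla_a e_j^b$ make sense (the final identity is frame-independent, so any smooth extension will do); and your closing remark that only the normal part of $e_i^a\nabla_a e_j^b$ contributes is slightly misstated --- the tangential part always contributes but is exactly absorbed into $\Delta_S g$, the critical-point hypothesis being needed only to make $\nabla^a g|_p$ a null normal in the first place and to identify the normal component of $\nabla_b g$ at $p$ with $\ell_b$, so that $H^b\nabla_b g|_p$ is $\Theta^S_\ell|_p$ rather than a combination of $\Theta^S_\ell$ and $\Theta^S_{\mathcal{k}}$.
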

\begin{proof}
Since $p$ is a critical point of $g|_S$, for every $X^a \in T_pS$,  $X^a \nabla_a g=0$. This implies that 
$\nabla^a g|_p$ is orthogonal to $T_pS$ and  along one of the two orthogonal future null directions of $S$ at $p$. Choose 
future null vector fields $\ell^a$ and $\mathcal{k}^a$ on $S$ normalized as usual ($\mathcal{k}^a \ell_a=-1$) 
and such that $\ell^a |_p = \nabla^a g|_p$. 
 To prove (\ref{se}), let  $O \subset M$ be an 
open neighborhood of $p$ admitting  a  basis of vector fields $\{ E_i^a, L^a, K^a  \, | \, i=1,...,n-2 \}$ 
with the $E_i^a$ orthonormal and $E_i^a|_S=:e_i^a$ tangent to $S$, $L^a$ and $K^a$ null, perpendicular  to the $E_i^a$ and 
satisfying $L_a K^a=-1$ and $L^a|_S=\ell^a$ (thus, $K^a|_S=\mathcal{k}^a$). 
Note that, in $O$, the inverse metric can be written as
\begin{equation}
g^{ab}= E_i^a E_i^b -L^a K^b - K^a L^b
\end{equation}
and then 
\begin{equation}\label{qq}
\nabla^a g = ( E_i^a E_i^b -L^a K^b - K^a L^b) \nabla_b g =: Z^a - g_K L^a -g_L K^a,
\end{equation}
from where 
\begin{equation}\label{Lspan}
L^a = -\frac{1}{g_K} \left[ \nabla^a g -Z^a + g_L K^a \right].
\end{equation}
Apply to the above equality the linear differential operator $E_{ia} E_i^b \nabla_b$:
\begin{equation}\label{pe}
\begin{split}
E_{ia} E_i^b \nabla_b L^a =& - E_{ia} E_i^b \nabla_b ({g_K}^{-1})  \left[ \nabla^a g -Z^a + g_L K^a \right] \\ 
 & -\frac{1}{g_K} \left[ E_{ia} E_i^b \nabla_b \nabla^a g - E_{ia} E_i^b \nabla_bZ^a + K^a E_{ia} E_i^b \nabla_bg_L +
 g_L E_{ia} E_i^b \nabla_b K^a\right]\\
 =& -\frac{1}{g_K} \left[ E_{ia} E_i^b \nabla_b \nabla^a g - E_{ia} E_i^b \nabla_bZ^a  +
 g_L E_{ia} E_i^b \nabla_b K^a\right]
\end{split}
\end{equation}
From (\ref{Lspan}) and $L^a|_p = \ell^a|_p=\nabla^a g |_p$ 
 follows that $g_k(p)=-1$ and $g_L(p)=0$. Evaluating equation (\ref{pe}) at $p$  then gives:
\begin{equation}\label{pp}
e_{ia} e_i^b \nabla_b \ell^a \mid_p = e_{ia} e_i^b \nabla_b \nabla^a g \mid_p - e_{ia} e_i^b \nabla_bZ^a \mid_p
\end{equation}
Consider now the null hypersurface $N[g(p)]:= g^{-1}(g(p))$. 
The vector field $\L^a = \nabla^a g$ on $M$ can be thought of as a null  extension of the affine tangent $\nabla^a g|_{N[g(p)]}$, 
so Lemma \ref{divL} applies and we recognize that the first term on the right side of (\ref{pp}) equals 
$\nabla_a \L^a |_p = \Box g|_p$ (and also equals $\Theta^{N[g(p)]}_{\nabla^a g} \mid_p$). 
From (\ref{qq}) follows 
that $Z^a|_S$ is the orthogonal projection of the gradient $\nabla^a g$ onto $S$, then the second term on the right in (\ref{pp}) is (minus) 
$\Delta_S g |_p$. Finally, (\ref{tls}) and (\ref{tita}) show that the left side of (\ref{pp}) is $\Theta^S_\ell$. 
It then follows that (\ref{pp})  translates  into (\ref{se}).
\end{proof}

\begin{coro} \label{coro1} Assume $g: M \to \mathbb{R}$ is $C^2$ and such that $\nabla^a g$ is future null on a domain $D \subset M$. 
Define   $X_g$ as the subset of $D$ where $\Box g >0$. 
If $S$ is a surface intersecting $X_g$ 
and $g|_S$ 
has a local maximum at $p \in X_g$ then  $S$ cannot satisfy the trapping or marginally trapping condition  at $p$.
\end{coro}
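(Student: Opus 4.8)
The plan is to apply Theorem~\ref{main} essentially verbatim and then read off signs. First I would note that a local maximum of $g|_S$ at $p$ is in particular a critical point of $g|_S$, so the hypotheses of Theorem~\ref{main} are satisfied at $p$: its part (i) guarantees that $\nabla^a g|_p$ is one of the two future null normals of $S$ at $p$, and part (ii), with the choice $\ell^a|_p=\nabla^a g|_p$, yields the identity (\ref{se}),
\begin{equation*}
\Theta^S_\ell|_p = \Box g|_p - \Delta_S g|_p .
\end{equation*}

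Next I would pin down the sign of each term on the right. Since $p\in X_g$, by the definition of $X_g$ we have $\Box g|_p>0$. For the second term I would use that $S$ is spacelike, hence carries a Riemannian induced metric: the intrinsic Laplacian $\Delta_S g$ is the trace, taken against the positive-definite $h^{ij}$, of the Hessian of $g|_S$. At a local maximum of $g|_S$ this Hessian is negative semidefinite, so its $h$-trace satisfies $\Delta_S g|_p\le 0$. Combining the two estimates gives $\Theta^S_\ell|_p = \Box g|_p - \Delta_S g|_p > 0$.

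Finally I would translate this strict positivity into the failure of trapping. The surface $S$ has exactly two future null normal directions at $p$, and we have exhibited one of them, $\ell^a|_p=\nabla^a g|_p$, whose divergence $\Theta^S_\ell|_p$ is strictly positive. Since the trapping condition at $p$ requires both $\Theta^S_\ell(p)$ and $\Theta^S_\k(p)$ to be negative, and the marginally trapping condition requires both to be non-positive, neither condition can hold at $p$. This proves the corollary.

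The argument is short once Theorem~\ref{main} is available, and there is really only one delicate point: the claim that $\Delta_S g|_p\le 0$ at a maximum. The key is that $S$ is spacelike, so its Laplacian is a genuine trace of the Hessian against a positive-definite metric and the ordinary second-derivative test applies; were $S$ not spacelike this step would break down. I would also be careful to invoke part (i) of the theorem, so that $\nabla^a g|_p$ is legitimately a future null normal of $S$ (and not merely some ambient vector) before calling $\Theta^S_\ell|_p$ a divergence of $S$.
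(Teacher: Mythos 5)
Your proposal is correct and follows the paper's own proof exactly: invoke Theorem \ref{main} at the critical point, use $\Delta_S g|_p \leq 0$ at the local maximum to get $\Theta^S_\ell|_p \geq \Box g|_p > 0$, and conclude that neither the trapping nor the marginally trapping condition can hold. Your added justifications (the Hessian trace argument for the sign of $\Delta_S g$ and the explicit appeal to part (i) of the theorem) are sound elaborations of steps the paper leaves implicit.
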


\begin{proof} In view of the local maximum condition $\Delta_S g |_p \leq 0$, then 
\begin{equation}\label{se2}
\Theta^S_{\ell} |_p = \Box g |_p - \Delta_S g |_p  \geq \Box g|_p >0.
\end{equation}
\end{proof}

\begin{coro} \label{coro2} 
Assume $g: M \to \mathbb{R}$ is $C^2$ and such that $\nabla^a g$ is future null on a domain $D \subset M$. 
Define   $X_g$ as the subset of $D$ where $\Box g >0$.  
If $S$ is a closed trapped or marginally trapped surface it is not possible that $S \subset X_g$.
\end{coro}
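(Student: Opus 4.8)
The plan is to argue by contradiction, using the compactness packaged into the word \emph{closed} to manufacture a point at which Corollary \ref{coro1} can be applied directly. So I would begin by assuming that $S$ is a closed trapped (or marginally trapped) surface satisfying $S \subset X_g$. Since $S$ is closed it is compact, and since $g$ is $C^2$ its restriction $g|_S$ is continuous; hence $g|_S$ attains a maximum at some point $p \in S$. The containment $S \subset X_g$ forces $p \in X_g$, so at $p$ the gradient $\nabla^a g$ is future null and $\Box g|_p > 0$, which are exactly the hypotheses feeding Corollary \ref{coro1}. The surface also intersects $X_g$ trivially, since $S \subset X_g$, so every assumption of that corollary is in place.

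Next I would note that a global maximum of $g|_S$ is in particular a local maximum, whence Corollary \ref{coro1} applies verbatim at $p$: choosing the future null field $\ell^a$ on $S$ with $\ell^a|_p = \nabla^a g|_p$, the estimate established in the proof of Corollary \ref{coro1} (equation (\ref{se2})) yields $\Theta^S_\ell|_p \geq \Box g|_p > 0$. The conclusion of Corollary \ref{coro1} is precisely that $S$ cannot satisfy the trapping or the marginally trapping condition at this particular point $p$.

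This is already the contradiction I am after: a closed trapped (marginally trapped) surface satisfies the trapping (marginally trapping) condition at \emph{every} one of its points, whereas I have exhibited a point $p \in S$ where it cannot. Hence no such surface can be contained in $X_g$. I do not expect a genuine obstacle, since the analytic content is carried entirely by Theorem \ref{main} and Corollary \ref{coro1}; the only steps deserving care are that compactness guarantees the maximum is \emph{attained} inside $X_g$ (so that the second-derivative test underlying $\Delta_S g|_p \leq 0$ is actually available), and that the positivity of $\Theta^S_\ell|_p$ for the specific normal $\ell^a|_p = \nabla^a g|_p$ already breaks both conditions, irrespective of how one chooses to label the two future null normals $\ell^a$ and $\k^a$ of $S$.
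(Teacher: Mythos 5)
Your proof is correct and is essentially the paper's own argument: the paper's proof of Corollary \ref{coro2} is the one-line observation that compactness of $S$ gives $g|_S$ a global, hence local, maximum, which (since $S\subset X_g$ places that maximum inside $X_g$) contradicts Corollary \ref{coro1}. Your write-up just spells out the same chain of reasoning in more detail.
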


\begin{proof} The compactness of $S$ implies $g|_S$ has a global, then a local maximum. 
\end{proof}

\noindent
\underline{Example (continued)}: \textit{Take $g=\sqrt{x^2+y^2}-t$, which has $\nabla^a g = \p_t + \frac{x}{r} \p_x + \frac{y}{r} \p_y$,
 $r=\sqrt{x^2+y^2}$. This 
is future null in the domain $r>0$ where $g$ is $C^2$. The restriction of $g|_S= \sqrt{a^2 \cos^2 \theta+b^2 \sin^2 \theta}$ has local maxima 
at $\theta=0, \pi$ and minima at $\theta=\pm \pi/2$. Let us analyze the local maximum at $\theta=0$, that is, the point $p$ 
with coordinates $(t=0,x=a,y=0)$. The induced metric on $S$ is $ds^2 = z^2(\theta)\, d\theta^2$, 
where $z(\theta)= \sqrt{a^2 \sin^2 \theta + b^2 \cos^2 \theta}$, then  
\begin{equation}
\Delta_S g |_S = \frac{1}{z(\theta)} \p_\theta \left[ \frac{1}{z(\theta)} g|_S \right] = -\frac{a^2b^2 (a^2-b^2) \cos(2 \theta)}{(a^2 \cos^2 \theta 
+ b^2 \sin^2 \theta)^{3/2} (a^2 \sin^2 \theta + b^2 \cos^2 \theta)^2}
\end{equation}
Now $\Box g = (x^2+y^2)^{-1/2}$, then 
\begin{equation}
\Box g |_p - \Delta_S g |_p = \frac{1}{a} -  \frac{b^2-a^2}{a b^2}
\end{equation}
which agrees with (\ref{tital}) evaluated at $\theta=0$. Note that $S$ cannot be trapped since  $S\subset X_g$ (we already knew $S$ was not trapped, 
since $H^a$ is spacelike for this surface, see (\ref{HS})).
\hspace*{9cm} \hrulefill \vspace{1cm}
}

\subsection{Geometric interpretation}

Corollary \ref{coro2} translates into more geometrical terms as follows:

\begin{coro} A closed trapped surface S cannot lie entirely within an open set $X$  that is foliated by future diverging null hypersurfaces.
\end{coro}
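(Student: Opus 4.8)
The plan is to reduce this geometric statement to Corollary \ref{coro2}, which has already been established. The key observation is that the hypothesis of Corollary \ref{coro2} is essentially already geometric: it posits a function $g$ whose gradient $\nabla^a g$ is future null on a domain, which (by Lemma \ref{divL2}) means the level sets of $g$ are future diverging null hypersurfaces precisely where $\Box g > 0$. So the work lies entirely in translating between the language of a \emph{foliation by future diverging null hypersurfaces} and the language of a \emph{function $g$ with future null gradient and positive $\Box g$}.

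First I would make the foliation explicit. Let $X$ be the open set foliated by future diverging null hypersurfaces. A foliation of $X$ by null hypersurfaces can locally be described as the level sets of a submersion $g: X \to \mathbb{R}$. Since each leaf is a null hypersurface, its normal $\nabla^a g$ is null on that leaf; but because \emph{every} level set is null, Lemma \ref{gaf} applies and $\nabla^a g$ is null throughout $X$, satisfying the affine geodesic equation with no rescaling. After possibly flipping the sign of $g$, I can arrange that $\nabla^a g$ is future pointing, so that $\nabla^a g$ is future null on all of $X$. This identifies $X$ with a domain $D$ of the type appearing in Theorem \ref{main} and its corollaries.

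Next I would connect the \emph{diverging} hypothesis to the condition $\Box g > 0$. By Lemma \ref{divL2}, with the affine choice $l^a = \nabla^a g|_N$ on each leaf $N$, the divergence is $\Theta_{N_l} = \Box g|_N$. The assumption that the foliating null hypersurfaces are future diverging means exactly $\Theta_{N_l} > 0$ on every leaf, hence $\Box g > 0$ throughout $X$. (One subtlety: divergence is defined only up to the positive rescaling of equation (\ref{rs2}), so ``future diverging'' is the sign-invariant statement $\Theta > 0$, which is precisely what $\Box g > 0$ captures under the affine normalization.) Therefore $X_g$, the subset of $D$ where $\Box g > 0$, coincides with all of $X$. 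Corollary \ref{coro2} then states that no closed trapped or marginally trapped surface can lie entirely in $X_g = X$, which is the desired conclusion.

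The main obstacle I anticipate is the \emph{local versus global} character of the defining function $g$. A foliation need not admit a single globally defined submersion whose level sets are the leaves; $g$ exists only locally. However, this is not fatal, because the conclusion is itself local in the sense that matters: a closed trapped surface $S \subset X$ is compact, so the restriction $g|_S$ attains a global maximum at some point $p$, and near $p$ a local defining function $g$ suffices to run the argument of Corollary \ref{coro1}, yielding $\Theta^S_\ell|_p > 0$ and contradicting the trapping condition at $p$. Thus I would phrase the proof so that the only function needed is a local $g$ in a neighborhood of the maximizing point $p$, sidestepping any global integrability question for the foliation.
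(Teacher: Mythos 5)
Your proof is correct and follows essentially the same route as the paper: represent the leaves as level sets of a function $g$ with future null gradient, use equation (\ref{Box}) to convert ``future diverging'' into $\Box g>0$ on $X$, and invoke Corollary \ref{coro2}. Your additional care about the foliation possibly admitting only a \emph{local} defining function --- resolved by localizing near the global maximum of $g|_S$ and running the argument of Corollary \ref{coro1} there --- is a legitimate refinement of a point the paper's one-line proof glosses over, and you handle it correctly.
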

\begin{proof}
The leaves of the foliation are level sets of a function $g:X \to \mathbb{R}$ with $\nabla^a g$ null. Changing $g \to -g$ if necessary, we may assume that 
$\nabla^a g$ is future and take $l^a = \nabla^a g\mid_N $ as the affine generator of the level set $N$. 
Since all level sets are future diverging, in view of (\ref{Box}) $\Box g>0$ on $X$, then Corollary \ref{coro2} applies to $X=X_g$.
 \end{proof}

In some simple cases this result allows  
 a rapid identification of highly symmetric $X$ sets by inspection. As an example, consider 
the  Reissnser-Nordström metric $ds^2=-f(r) dv^2+ 2 dv\, dr + r^2 (d \theta^2+ \sin^2 \theta \, d\phi^2)$, where $f(r)=(r-r_e)(r-r_i)/r^2$ and 
$X$ defined by the condition $0<r_i<r_e$. A conformal diagram is shown in Figure  \ref{zzz}, where some level sets of $r$ are displayed as thin black lines. 
The $r$ level sets (which are, generically, non-null) are transverse to the null hypersurface foliation we are interested in,  
and allow to determine by inspection if these null hypersurfaces have 
positive or negative divergence. 
 The inner region $0<r<r_i$ 
is foliated by spherically symmetric null hypersurfaces (thick gray lines in the figure). These are easily seen to 
 diverge towards the future, since they have 
proper cross 
sections which are spheres with radii growing from zero to $r_i$. We conclude that it is impossible to fit a trapped surface 
(any size, any geometry) within this region. On the other hand, as is well known, the region $r_i<r<r_e$ admits closed trapped surfaces: 
a calculation shows that any sphere of constant $v$ and $r$ 
  is trapped. This result can also be proved by inspection: a sphere $S$ like this is represented by a point in this diagram, and the null bundles 
$N_\ell^S$ and $N_\k^S$, which are spherically symmetric, are represented   by the gray thick lines emerging from $S$. 
Both bundles  evolve to 
smaller $r$ values towards the future, so that they  have negative expansion everywhere, then at $S$, showing that $S$ is trapped.\\

\begin{figure}\begin{center}
\includegraphics[scale=0.2]{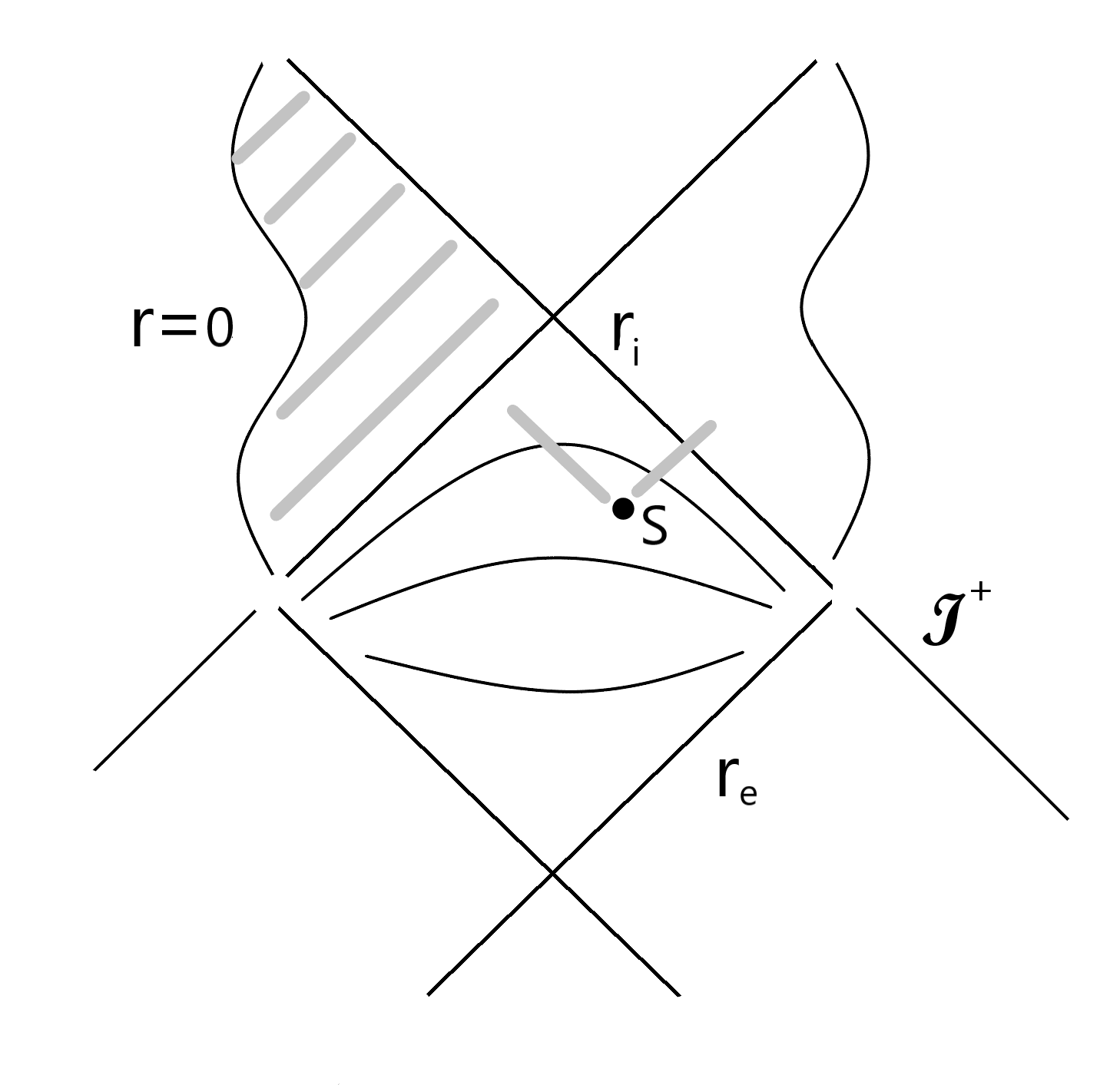}
\end{center}
\caption{The conformal diagram for the Reissnser-Nordström metric. Qualitative knowledge of the level sets of the area radius $r$ (thin black lines) 
allows to 
anticipate that there are no closed trapped surfaces in  the inner region $0<r<r_i$, as this is foliated by spherically symmetric null hypersurfaces 
diverging towards the future (thick gray lines in this region). The sphere $S$ in the region defined by $r_i<r<r_e$, on the other hand, is trapped, 
since both bundles of normal null geodesics (thick gray lines emerging from $S$) evolve towards smaller $r$ regions. \label{zzz}}
\end{figure}

The proof of Theorem \ref{main}  brings our attention to 
the null hypersurface $N[g(p)]=g^{-1}(g(p))$ (see the paragraph following equation (\ref{pp})), for which the restriction of 
$\nabla^a g$ is tangent to affine generators. In view of our previous results, $\Theta_{N[g(p)]_{\nabla^a g}}= \Box g |_{N[g(p)]}$, then 
equation (\ref{se}) can be written as 
\begin{equation} \label{comp1}
\Theta_{N^S_l}\mid_p - \Theta_{N[g(p)]_{\nabla^a g}}\mid_p = -\Delta_S g \mid_p, 
\end{equation}
 Equation (\ref{comp1}) gives a comparison of the divergence of two null 
surfaces intersecting  at a point $p$, one of them a level set of $g$. In particular, it says that 
$\Theta_{N^S_l}\mid_p \geq \Theta_{N[g(p)]_{\nabla^a g}}$ is $p$ is a local maximum of $g$. 
 Since $\nabla^a g(p)$ gives the orthogonal null direction 
for both  $T_p N^S_l$ and $T_p N[g(p)]$, the intersection of these two null hypersurfaces is tangential at $p$. 
In general, if two null hypersurfaces $N_\a$, $\a=1,2$ with future affine null geodesic fields $l_\a^a$ intersect \textit{tangentially} at a point $p$, then 
$(l_1^a(p))^\perp = T_pN_1=T_pN_2 = (l_2^a(p))^\perp$, and this  implies $l_2^a(p) \propto l_1^a(p)$. As a consequence, the geodesics with initial condition 
$l_2^a(p)$ and $l_1^a(p)$ agree and therefore  $N_1$ and $N_2$ share an open segment of this generator around $p$,  along which they intersect tangentially. 
This situation is studied in  \cite{Galloway1,Galloway2} where its is shown that, if a timelike hipersurface $P$ is chosen that intersects both null surfaces 
\textit{properly} (meaning that $\Sigma_1 = P \cap  N_1$ and $\Sigma_2 = P \cap \subset N_2$ are spacelike submanifolds), and 
 $p \in N_1 \cap N_2 \cap P = \Sigma_1 \cap \Sigma_2$, 
 there is a one to one map $h: \Sigma_1 \to \Sigma_2$ fixing $p$, and a normalization of the affine generators $l_1^a$ and $l_2^a$ with 
$l_1^a(p)=l_2^a(p)$, such that, \textit{if $N_2$ lies at the future of $N_1$ near $p$}, then  (compare with (\ref{comp1}))
\begin{equation} \label{ineq1}
\Theta_{{N_2}_{l_2}} |_{\Sigma_2}(h(x)) \geq \Theta_{{N_1}_{l_1}} |_{\Sigma_1}(x)
\end{equation}
 and moreover 
\begin{equation}\label{ineq2}
\Theta_{{N_2}_{l_2}} |_{\Sigma_2}(h(x)) \leq \Theta_{{N_1}_{l_1}} |_{\Sigma_1}(x) \Rightarrow N_1=N_2 \;\; \text{ near } p.
\end{equation}
The strength of this result in general situations is not clear, because it assumes that the generators  have been scaled in a particular way that 
happens to depend on the  chosen $P$ hypersurface, a fact that  makes it difficult  to anticipate if the inequality (\ref{ineq2}) holds. 
An obvious exception is when the divergences have opposite signs, since 
in this case  the scaling is irrelevant. In this case we are led to \cite{Galloway1,Galloway2}:

\begin{thm} \nonumber    (G. Galloway \cite{Galloway1} \cite{Galloway2})
Let $N_1$ and $N_2$ be smooth null hypersurfaces in a spacetime M. 
Suppose,
\begin{enumerate}
\item  $N_1$ and $N_2$ are tangent  at a point  $p$ and $N_2$ lies to the future side of $N_1$ near $p$, and
\item  the divergences $\Theta_1$ of $N_1$  and $\Theta_2$ of $N_2$  satisfy $\Theta_2 \leq  0 \leq \Theta_1$.
\end{enumerate}
Then $N_1$ and $N_2$ coincide near $p$ and this common null hypersurface has  $\Theta=0$. 
\end{thm}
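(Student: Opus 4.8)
The plan is to reduce the statement to the comparison inequalities (\ref{ineq1}) and (\ref{ineq2}) already recorded above, which is legitimate precisely because the hypothesis $\Theta_2\le 0\le\Theta_1$ neutralizes the scaling ambiguity flagged in the discussion preceding the statement. First I would fix a timelike hypersurface $P$ through $p$ meeting $N_1$ and $N_2$ \emph{properly}, so that $\Sigma_1=P\cap N_1$ and $\Sigma_2=P\cap N_2$ are spacelike, $p\in\Sigma_1\cap\Sigma_2$, and Galloway's map $h:\Sigma_1\to\Sigma_2$ fixing $p$, together with the normalization $l_1^a(p)=l_2^a(p)$, are available. Since $N_2$ lies to the future of $N_1$ near $p$, hypothesis (1) lets me invoke (\ref{ineq1}): $\Theta_{{N_2}_{l_2}}(h(x))\ge\Theta_{{N_1}_{l_1}}(x)$ for $x\in\Sigma_1$ near $p$.

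The only delicate point is that (\ref{ineq1})--(\ref{ineq2}) hold for a $P$-dependent scaling of the generators, which in general obstructs their use. Here this is harmless: by (\ref{rs2}) a rescaling $l^a\to f l^a$ with $f>0$ multiplies $\Theta$ by $f$, so the \emph{sign} of each divergence is invariant under the change of normalization. Consequently the hypothesis $\Theta_2\le 0\le\Theta_1$, stated for whatever normalization one prefers, survives in the particular scaling used in (\ref{ineq1}): in that scaling one still has $\Theta_{{N_1}_{l_1}}\ge 0$ and $\Theta_{{N_2}_{l_2}}\le 0$ on the relevant sections. Feeding this into (\ref{ineq1}) gives the chain $0\ge\Theta_{{N_2}_{l_2}}(h(x))\ge\Theta_{{N_1}_{l_1}}(x)\ge 0$, forcing $\Theta_{{N_1}_{l_1}}(x)=\Theta_{{N_2}_{l_2}}(h(x))=0$ for all $x$ near $p$.

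In particular the equality case of (\ref{ineq1}) now holds, i.e. $\Theta_{{N_2}_{l_2}}(h(x))\le\Theta_{{N_1}_{l_1}}(x)$, so the rigidity statement (\ref{ineq2}) applies and yields $N_1=N_2$ near $p$. On this common null hypersurface $\Theta_1$ and $\Theta_2$ are the divergence of one and the same generator field, hence equal, while the sign conditions give $\Theta\le 0$ and $\Theta\ge 0$ pointwise; therefore $\Theta\equiv 0$ near $p$, which is the asserted conclusion.

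The substantive work is hidden inside (\ref{ineq1})--(\ref{ineq2}), whose proof in \cite{Galloway1,Galloway2} I would expect to reconstruct by writing $N_1$ and $N_2$ as graphs $t=u_1(x)$, $t=u_2(x)$ over a spacelike slice transverse to the common null direction at $p$, with $u_2\ge u_1$, $u_2(0)=u_1(0)$ and $Du_2(0)=Du_1(0)$ by tangency. The null expansion of such a graph is a quasilinear operator that is elliptic in the transverse directions, and subtracting the two expansion equations linearizes the difference $w=u_2-u_1\ge 0$ into $Lw$ for a linear elliptic $L$; the strong maximum principle (Hopf lemma) applied at the interior minimum $w(0)=0$ then forces $w\equiv 0$. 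The main obstacle in that argument---and the reason the general comparison is awkward---is precisely the $P$-dependent normalization of the generators, which is exactly what the opposite-sign hypothesis of this theorem removes, so for the present statement the clean combinatorial argument above suffices.
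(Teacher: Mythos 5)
This theorem is quoted from \cite{Galloway1,Galloway2}; the paper supplies no proof of its own, only the surrounding discussion that motivates it via the comparison statements (\ref{ineq1})--(\ref{ineq2}) and the remark that the opposite-sign hypothesis makes the normalization ambiguity harmless. Your first two paragraphs are a correct and essentially faithful expansion of exactly that remark: by (\ref{rs2}) the sign of a divergence is invariant under positive rescaling, so the hypothesis $\Theta_2\le 0\le\Theta_1$ survives in whatever normalization (\ref{ineq1}) requires; the sandwich $0\ge\Theta_{{N_2}_{l_2}}(h(x))\ge\Theta_{{N_1}_{l_1}}(x)\ge 0$ forces both to vanish; and the equality case triggers the rigidity (\ref{ineq2}). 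Two caveats: (i) this yields $\Theta\equiv 0$ only on the sections $\Sigma_1,\Sigma_2$ cut out by the chosen $P$, so to obtain the stated conclusion on a full neighborhood of $p$ in $N_1=N_2$ you should note that $P$ was arbitrary (or propagate the vanishing along the generators); (ii) the argument is only as strong as (\ref{ineq1})--(\ref{ineq2}), which are themselves Galloway's results, so what you have is a reduction, not an independent proof.

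The last paragraph, where the real content would have to live, glosses over the essential difficulty. Writing $N_1,N_2$ as graphs and subtracting the null expansion operators is indeed Galloway's strategy, but the resulting operator is only \emph{degenerate} elliptic: it is elliptic in the $n-1$ directions transverse to the common generator and has no second-order part along the generator itself, so the standard strong maximum principle and Hopf lemma you invoke at the interior minimum of $w=u_2-u_1$ do not apply as stated. Galloway's main technical contribution in \cite{Galloway1} is precisely a maximum principle adapted to this degenerate situation (and to null hypersurfaces that are a priori only $C^0$, handled via smooth support hypersurfaces). Your sketch identifies the correct skeleton but elides the step that makes the theorem nontrivial; as a derivation from the quoted inequalities it is sound and matches the paper's intent, but it should not be presented as a self-contained proof.
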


Consider now  the situation of 
a \textit{surface} $S$ intersecting \textit{tangentially} at  $p$ a null hypersurface $N_1$ with affine null field $l_1^a$. Since 
$T_pS \subset T_p N_1 = (\mathbb{R} l_1^a(p))^\perp$,  $l_1^a(p)$ points along  one of the two future null directions 
orthogonal to $T_pS$, then the  geodesic it generates belongs to both $N_1$ and $N_2:=N^S_{\ell}$: the null geodesic bundle 
of $S$ with any affine generators $l^a_2$ satisfying $l^a_2(p)=l_1^a(p)$ (here  $\ell^a := l_2^a \mid_S$). 
Under the hypothesis of Theorem \ref{main}, $S$ intersects tangentially 
the null hypersurface $N[g(p)]$, then so does its null bundle $N^S_\ell$, $\ell^a(p) = \nabla^a g(p)$. This leads  to the situation of Galloway's 
theorem with $N_2 \equiv N^S_\ell$ and $N_1 \equiv N[g(p)]$. The fact that $p$ is not \textit{any} critical point but a local maximum implies that 
$N_2$ lies to the future side of $N_1$ near $p$, meaning that  timelike curves from $N_1$ to $N_2$ in an open neighborhood of $p$  
must be future oriented. To prove this we need to show that $p$ is a local maximum of $g|_{N_2}$: to this end, take an open neighborhood $p \in U \subset S$ 
such that $g(q) \leq g(p)$ if $q \in U$. For $q \neq p$ either $g(q) < q(p)$, in which case there is a small open segment 
of the $N_2$ generator through $q$,  containing $q$,  where the condition $g<g(p)$ holds,  or  $g(q)=g(p)$, 
in which case $q$ is also a local maximum of $g|_S$, a segment around $q$ of the $N_2$ generator through $q$ is contained in $N_1$, and $g=g(p)$ 
along it. Thus, there is 
  an open neighborhood $\tilde U$ of $p$ in $N_2$  such that $g(r) \leq g(p)$ for $r \in \tilde U$, as we wanted to prove. 
  Note that,  since $N_1$ is the level set $g=g(p)$, 
   this implies that near $p$ $N_2$ is on the $g \geq g(p)$ side of $N_1$. To show that this is the future side, consider
  a timelike curve $[0,\tau_o] \ni \tau \to x^a(\tau)$ from $N_1$ to $\tilde U$. This must be future directed, as 
if it were past directed then $dg/d\tau= \nabla_a g \, d x^a / d\tau >0$ along the curve, which is inconsistent with $\tilde U 
 \ni g(x(\tau_o)) \leq g(x(0))=g(p)$.
Thus, the local maximum condition assures that conditions 1 in Galloway's theorem are fulfilled. Since $\Theta_1 = \Box g >0$, assuming 
that $S$ is trapped at $p$ would imply  $\Theta_1 <0$ near $p$ and  contradict Galloway's theorem. We then conclude that $S$ is not trapped at $p$. \\

\noindent
\underline{Example (continued)}: \textit{Figure \ref{xfig} shows the example 
we have been developing: the ellipse $S$ of equation (\ref{ellipse})  is the $t=0$ section of $N_2:=N^S_\ell$, equation (\ref{exNl}), which lies in front 
of  the level surface $N_1$ of $g=\sqrt{x^2+y^2}-t$ containing the point $p=(t=0,x=a,y=0)$. $N_1$ and $N_2$ are tangent 
along the generator through $p$ (shown in the figure). The timelike plane $P$, defined by $x=a$ (semi-transparent in the figure) 
is as required by Galloway's theorem, 
the sections $P \cap N^S_\ell =: \Sigma_2$ and $P \cap N[g(p)]=:\Sigma_1$ where the divergences  of these null surfaces are compared 
in  (\ref{ineq1}) are shown (here $h$ is the map connecting the sections along by integral lines of $\p_t$). Note that the inequality (\ref{ineq2}) holds 
for the entire sections, whereas the equality (\ref{se}) holds at $p$. }

\begin{figure}
\includegraphics[scale=0.4]{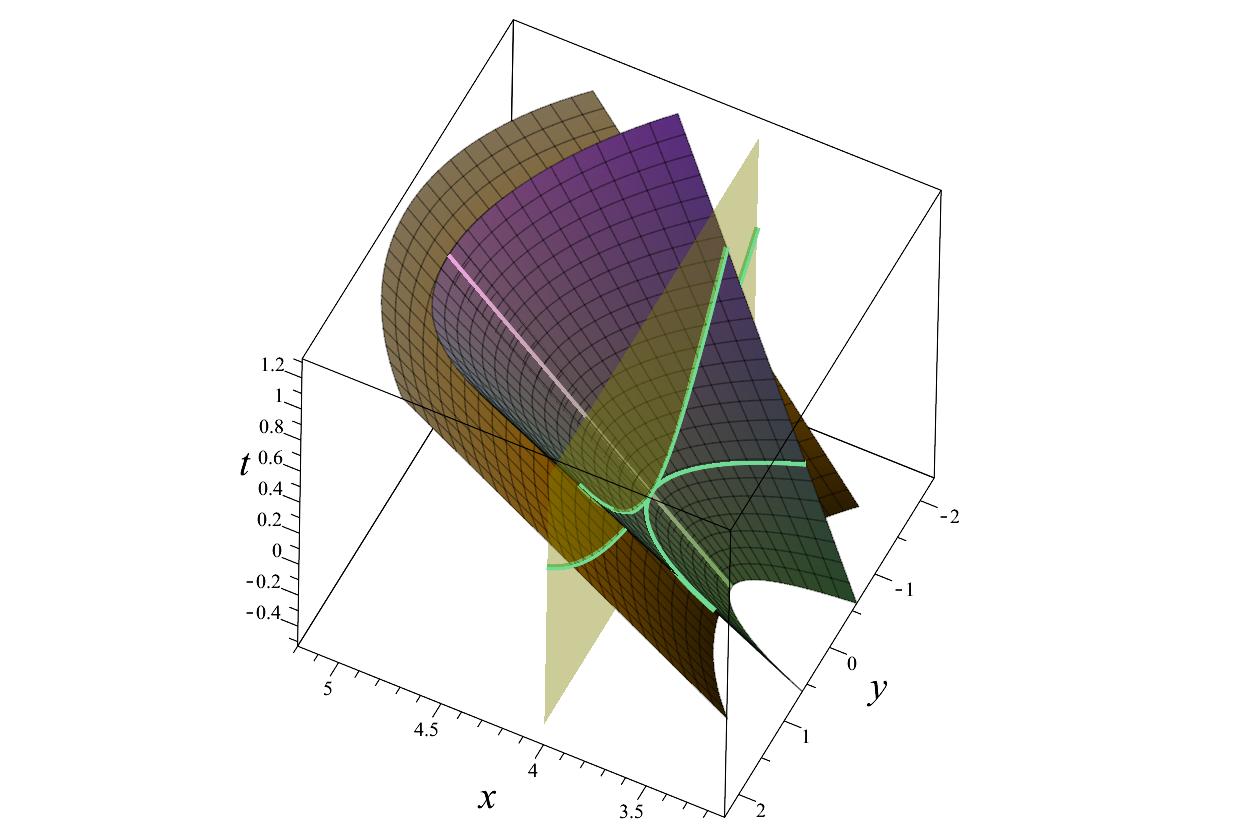}
\caption{Example developed through the text: (part of) the ellipse $S$ of equation (\ref{ellipse}) 
is shown as the $t=0$ subset of $N^S_\ell $, which appears in front of  the tangent null hypersurface 
$N[g(p)]$. The shared generator of these null hypersurfaces is shown, as well as 
the  timelike hyperplane $P$ (transparent)  used in Galloway's theorem and its intersection with the null hypersurfaces.
The point $p$ lies  in the intersection of all these surfaces and hypersurfaces.    In the figure, the ellipse semi-axes 
have values $a=4$ and $b=2$. \label{xfig}}
\end{figure}

\section{Applications} \label{apps}

The application of Corollary \ref{coro1} to a given spacetime $(M,g_{ab})$ is straightforward: 
given solution $g$ of the eikonal equation 
\begin{equation}\label{eiko}
g^{ab} \p_a g \, \p_b g =0,
\end{equation} 
either $g$ or $-g$ has  a future null gradient  in the region $D$ 
we are interested in. After solving this sign ambiguity (say, in favor of $g$) 
we find the subset $X_g \subset D$ defined by the condition that $g^{ab} \nabla_a \nabla_b g>0$. 
Since equation (\ref{eiko})  arises when solving the geodesic equation using the Hamilton-Jacobi method, 
it has been studied in detail for many background spacetimes.

\subsection{Minkowski spacetime}

The case where $M$ is the  $(n+1)-$dimensional Minkowski space $\mathbb{R}^{n+1}_1$, $n \geq 2$,
  offers the simplest application of Corollary \ref{coro1}. 
Assuming 
there is a closed trapped surface $S \subset M$ leads to a contradiction: take a global inertial frame with the $t$ axis away from $S$. 
In  spherical coordinates 
the metric is
\begin{equation}
ds^2=-dt^2 +dr^2 +r^2 d\Omega_{n-1}^2,
\end{equation}
$d\Omega_{n-1}^2$ the metric of the unit $(n-1)-$sphere. 
 The function  $g=r-t$ is $C^\infty$ in the domain $O$ defined by $r>0$ and has gradient 
$\p_t + \p_r$, which is future null. By hypothesis, $S \subset O$. However this is not possible 
since  $\nabla_a \nabla^a g = \nabla_a \nabla^a r = (n-1)/r>0$ in $O=X_g$. \\

Note: the most elegant proof that there are no closed trapped surfaces in an open stationary subset of a spacetime $(M,g_{ab})$  comes 
from choosing $\zeta^a$ in (\ref{evo}) a future timelike Killing vector field. If $S$ were trapped then $H^a$ would be future timelike 
and the integrand in (\ref{evo}) would be negative. However the area of $S$ is invariant under the flow of a Killing vector field, thus 
the left side of (\ref{evo}) is zero and we get a contradiction. 

\subsection{General spherically symmetric spacetimes with static regions } 
In advanced Eddington-Finkelstein 
coordinates, static regions of spherically symmetric spacetimes  are  given by the $f(r)>0$ sector/s of a  metric of the form 
\begin{equation}\label{sss}
ds^2 = -f(r) \, dv^2+ 2 k(r) \, dv dr + r^2  (d \theta^2 + \sin^2 \theta d\phi^2), \;\; r>0, \;\; -\infty < v < \infty, \;\; k(r)>0.
\end{equation}
We take the  time orientation  such that the globally defined null vector $O^a \p_a= -\p_r$ is future, then  the Killing vector $\p_v$ 
is future timelike  wherever  $f(r)>0$. We assume that  $f(r)$ has $n$ zeroes (not necessarily simple) 
at the positive values $r_1<r_2<...<r_n$ and that $f(r)>0$ for  $r>r_n$. 
We  make no assumptions on the  asymptotic behavior of $f(r)$ for large $r$. In particular,  (\ref{sss}) is not necessarily extendable 
to a black hole spacetime. 
We define the open sets $O_j$ by $r_j<r<r_{j+1}$, and add the special cases $O_0$ ($0<r<r_1$) and $O_n$ ($r>r_n$). \\
A calculation shows that the $v=v_o, r=r_o$ spheres have mean curvature vector field
\begin{equation}\label{smcvf}
H = \frac{2}{r_o k(r_o)} \p_v + \frac{2 f(r_o)}{r_o k(r_o)^2} \p_r,
\end{equation}
which is future timelike iff $f(r_o)<0$, so these spheres are trapped 
 iff $f(r_o)<0$,
 that is, in the non-static regions where the Killing vector field $\p_v$ is 
spacelike. 
We will use Corollaries \ref{coro1} and \ref{coro2} 
to show that no closed trapped surface can lie in a connected static $f>0$ region, and to analyze the possibility 
that these surfaces enter the static regions from the non-static ones. 

\subsubsection{Spherically symmetric $X$ sets}
The spherically symmetric solutions of the eikonal equation (\ref{eiko}) for the metric (\ref{sss}) are 
\begin{equation}\label{solus}
g_1(v,r)= F_1(v) \;\; \text{ and } \;\;  g_2(v,r)=F_2(v-2r^*(r)), \;\;\; \left( r^*(r) = \int^r \frac{k(r)}{f(r)} dr \right)
\end{equation}
$g_1$ is globally defined and $g_2$ has, in general, a domain with connected components 
the    $O_j$. 
The generators of the null level sets  of $g_1$ ($g_2$) are the incoming (outgoing) radial null geodesics.  \\

The function $g_1$ is of no use  because $(g^{ab} \p_b g_1)   \p_a = (F_1'(v)/k(r) )\, \p_r$ and
$g^{ab} \nabla_a \nabla_b g_1 = 2F_1'(v)/ (r k(r))$, 
so that $F_1'<0$ is required for $\nabla^a g_1$  to be future oriented, but then  $g^{ab} \nabla_a \nabla_b g_1$ is negative definite and 
 $X_{g_1}$ is empty. \\

For $g_2$, instead,  we find that
\begin{equation}
O^a \p_a g_2 = -\p_r g_2 = \frac{2k(r)}{f(r)} F_2'(v-2r^*(r)), \;\;\;\; g^{ab} \nabla_a \nabla_b g_2= -\frac{2}{r k(r)} F_2'(v-2r^*(r)).
\end{equation}
In a connected region where $f<0$ we can reason as above to prove that, whenever $\nabla^a g_2$ is future,  
 $X_{g_2}$  is empty. This is to be expected, since $f<0$ regions contain trapped spheres. In an open  $O_j$ with $f>0$, instead, 
 $F_2'(v-2r^*(r))<0$ is required for $\nabla^a g_2$ to be future, so we may use any $F_2: \mathbb{R} \to \mathbb{R}$ 
with $F_2'$ negative definite. As a result, $g^{ab} \nabla_a \nabla_b g_2$ will be positive in the entire $O_j$ region, proving that 
 no closed trapped surface is contained in $O_j$.  This rules out closed trapped surfaces 
 in, e.g.,  the inner region $0<r_i$ or the outer domain $r>r_e$ of a Reissner-Nordström black hole, 
  the interior of  extremal charged black holes, 
 the outer domain of a Schwarzschild spacetime, Schwarzschild's  naked singularity, and $f>0$ regions of regular black holes.

 \subsubsection{Trapped surface barriers}
 
 If $r=r_n$, the largest zero of  $f(r)$, is simple, the 
 null hypersurface $r=r_n$    works as an event horizon, as it prevents closed trapped surfaces from $r<r_n$ 
 to enter the outer region $O_n$ defined by $r>r_n$  (recall that we assumed that $f(r)>0$ for $r>r_n$).
  This happens  \textit{no matter what 
 the asymptotic behavior of $f(r)$ is.} 
To prove this, pick $r_o>r_n$ to fix the definition of $r^*(r)=\int_{r_o}^r k(r') dr'/f(r')$. Note that $r^*(r)$ has  a logarithmic singularity: for $r \gtrsim r_n$ we have   
  $r^*(r) \simeq (k(r_n)/f'(r_n)) \ln (r /r_n-1)$). Choose any $F_2$ in (\ref{solus}) with negative definite derivative 
  and a finite limit $F_2(z)$ as $z \to \infty$, e.g.,  $F_2(z)=-\arctan(z)$, then 
\begin{equation}\label{ehk}
g_2 = -\arctan\left(v- 2 \int_{r_o}^r  k(r') dr'/f(r') \right)
\end{equation}
has $\nabla^a g_2$ future null and $\nabla^a \nabla_a g_2>0$ in the domain $O_n$. 
The important characteristic of  (\ref{ehk}) is that,  at any point $p \in O_n$,  
$g_2(p) > -\tfrac{1}{2} \pi = \lim_{r \to {r_n}^+}g_2(v,r)$, so that any closed surface entering $O_n$ would be forced to attain 
a local maximum of  $g_2|_{S \cap O}$. As this contradicts Corollary \ref{coro1}, we conclude that no closed trapped surface extends 
beyond the null hypersurface $r=r_n$. \\
We insist that this conclusion holds no matter what the global structure of the spacetime is: the null hypersurface $r=r_n$ works effectively 
as a black hole event horizon, in the sense that it is a barrier that closed trapped surfaces cannot cross. 
The proof just given should be compared with 
the proof (and hypotheses) that closed  trapped surface cannot trespass the event horizon of a black hole spacetime (see, e.g., Proposition 12.2.2 
in \cite{waldbook}). \\

Would a similar  argument  prove that closed trapped surfaces cannot enter an $f>0$ region $O_{j-1}$ 
from an $f<0$ region $O_j$?
The answer is in the negative: trespassing to the left is not forbidden. 
Assume that $f$ has a simple zero at  $r_{j}$ and $f<0$ in $O_j$. 
The argument above implies that a closed trapped surface in $O_j$ cannot trespass $r_{j+1}$ and end within $O_{j+1}$,   
as it would be forced to have a local maximum of a suitable $g_2$ in $O_{j+1}$. However, 
there is no obstruction from Corollary \ref{coro1} for  such a surface to enter the 
region $O_{j-1}$ and end there. This is so because $r^*(r) \to - \infty$ as $r \to r_{j}^-$ and, 
since $F_2$ in (\ref{solus}) has negative derivative, an appropriate $g_2$ would now be forced to have a local \textit{minimum} in 
$O_{j-1}$, and this does not conflict Corollary \ref{coro1}.

\subsubsection{Non spherically symmetric $X$ sets}

The eikonal equation (\ref{eiko}) admits a three-parametric, separable solution on the background (\ref{sss}):
\begin{equation}\label{sgf}
g=E \left( v-\int \!\frac{k(r)}{f(r)}\,dr
 \right) +\Phi\,\phi+ s_1\,\int \!{\frac{k(r) \sqrt {{E}^{2}{r}^{2}- {\alpha}^{2}f
 \left( r \right) }}{f \left( r \right) r}}\,dr+s_2\,\int \frac {\sqrt {{\alpha}^{2}  \sin^2  \theta
 -{\Phi}^{2}}}{\sin  \theta }
\,d\theta
\end{equation}
where $s_1=\pm 1$ and $s_2=\pm 1$ are independent signs. Note that setting $\alpha=0=\Phi$ we get 
functions of the form (\ref{solus}). In this section  we  look for  obstructions for closed trapped 
surfaces in $f<0$ regions (where trapped \textit{sphere} exist, see (\ref{smcvf})),
 then the integrands in the $r$ integrals are well defined for any value of $(E,\Phi,\alpha)$ and 
$s_1=-1$ guarantees that $\nabla^a g$ is future oriented, so we make this choice. 
Besides, we require that $\Phi^2/\alpha^2<1$. The domain 
of $\theta$ in (\ref{sgf}) is restricted around the equator by
\begin{equation} \label{domti}
\sin^2 \theta > \frac{\Phi^2}{\alpha^2}.
\end{equation}
The condition $g^{ab} \nabla_a \nabla_b g >0$ gives an  $X$ set that is invariant under the flow of the Killing vector fields $\p_v$ and $\p_\phi$: 
\begin{equation} \label{xsss}
X_g=\left\{ (v,r,\theta,\phi) \; \big| \;  \frac{ 4\,{E}^{2}{r}^{2}- {\alpha}^{2}r  f'(r)-2\,{\alpha}^{2} f
 \left( r \right)}{\sqrt {{E}^{2}{r}^{2}-f \left( r \right) {\alpha}^{2}}} <2\,s_2\, k(r)\; \alpha^2 \frac{\cos \left( \theta \right) }{\sqrt {{\alpha}^{2} \sin^2 \theta -{
\Phi}^{2}}}\right\}
\end{equation}
Note that the inequality that defines $X_g$ is invariant under the simultaneous change $\theta \to \pi-\theta$, $s_2 \to -s_2$: 
if $(v,r,\theta,\phi)$ belongs to the $X_g$ set defined by using $s_2=1$ in $g$, then $(v,r,\pi-\theta,\phi)$ belongs to the $X_g$ set defined by $s_2=-1$.  
This ``mirror'' $X$ set is to be expected from the symmetries of the metric.  

\subsubsection{Schwarzschild black hole interior}

Taking $s_2=1$, $k(r)=1$ and $f(r)=1-2M/r$ in (\ref{xsss}) gives $X_g$ defined by 
\begin{equation} \label{xsch1}
{\frac { \left( M-r \right) {\alpha}^{2}+2\,{E}^{2}{r}^{3}}{{\alpha}^{2} \sqrt {{E}^{2}{r}^{3}+ \left( -r+2\,M \right) {\alpha}^{2}}\sqrt {r}}}
<\frac{\cos \theta}{\sqrt{\alpha^2 \sin^2 \theta-\Phi^2}}
\end{equation}
The left hand side above grows with $E^2$, so we get the largest  $X$ set by setting $E=0$, 
\begin{equation}
\frac{M-r}{\sqrt{r} \sqrt{\alpha^2 (2M-r)}} < \frac{\cos \theta}{\sqrt{\alpha^2 \sin^2 \theta -\Phi^2}}
\end{equation}
Note that the open set (\ref{xset}) found in \cite{wald}, which was the initial motivation for this work, 
 corresponds to the particular case $\alpha=1, \Phi=0$ above.

\subsection{Kerr spacetime}

Consider sub-extreme ($0<a<M$) Kerr spacetime in advanced coordinates
\begin{equation}\label{adv}
ds^2 =  -\left(1- \frac{2Mr}{\rho^2} \right) dv^2+\rho^2 d\theta^2 + \left[ r^2+a^2+\frac{2Mr a^2 \sin^2\theta}{\rho^2} \right] \sin^2\theta  d\varphi^2\\
-\frac{4Mar \sin^2 \theta}{\rho^2} dv \, d\varphi + 2 \, dv \,dr -2 a \sin^2 \theta \ dr \, d\varphi. 
\end{equation}
Here
\begin{equation}
 \rho^2 = r^2+a^2 \cos^2 \theta, \;\; -\infty <,v,r < \infty, 
 \end{equation}
 and $\theta,\phi$ are the standard coordinates of $S^2$. The time orientation is such that 
the null vector 
\begin{equation}
O=-\p_r
\end{equation}
 is future pointing. \\

If $r_o>0$, the closed surfaces $r=r_o, v=v_o$ are spheres with a non standard $(++)$ metric. A calculation of the mean curvature vector 
field show that these are trapped iff $r_-<r_o<r_+$, where $r_\pm = M\pm \sqrt{M^2-a^2}$ are the inner and outer horizons. \\

The eikonal equation admits separable solutions thanks to the  Killing vector fields $\p_v$ and $\p_\varphi$,  
(to which the constants $E$ and $\Phi$ below are related) and a Killing tensor (to which the constant $Q$ below is related). 
This can be  written as 
\begin{multline}\label{eikerr}
g=E \left( v-\int \!{\frac {{a}^{2}+{r}^{2}}{{a}^{2}-2\,Mr+{r}^{2}}}
\,{\rm d}r \right) +\Phi\, \left( \varphi-\int \!{\frac {a}{{a}^{2}-2\,Mr
+{r}^{2}}}\,{\rm d}r \right) \\
+s_1\,\int \!{\frac {\sqrt {{E}^{2}{
r}^{4}+ \left( {E}^{2}{a}^{2}-{\Phi}^{2}-Q \right) {r}^{2}+2\,M
 \left( {E}^{2}{a}^{2}-2\,E\Phi\,a+{\Phi}^{2}+Q \right) r-Q{a}^{2}}}{{
a}^{2}-2\,Mr+{r}^{2}}}\,{\rm d}r\\
+s_2\,\int \!\sqrt {Q+{E}^{2} a^2 
  \cos^2  \theta    -{\Phi}^{2}
  \cot^2  \theta   }\,{\rm d}\theta,
\end{multline}
where $s_j=\pm1 $ are independent signs. Since $g$ has three parameters, there are plenty of possibilities to 
explore. Any   choice of $(E,\Phi,Q)$ restricts the domain of (\ref{eikerr}) in a way that $v$ and $\phi$ are unconstrained and $r$ and 
$\theta$ are limited by the conditions that the arguments of the square roots in (\ref{eikerr}) be positive. This immediately tells us 
that $\nabla^a g$ will be future between horizons (respectively outside this region) if $s_1=-1$ ($s_1=1$). \\

A natural question to ask is if there is an obstruction for closed trapped surfaces between horizons that generalizes  (\ref{xset}) 
to the rotating case. 
To answer this question we set $E=0, \Phi=0, Q=1$ and (as explained above), $s_1=-1$ in (\ref{eikerr}). This gives 
\begin{equation}
g =s_2 \; \theta + \arctan \left( \frac{r-M}{\sqrt{2Mr+a^2-r^2}} \right).  
\end{equation}
For the sign $s_2$  comments analogous to those following equation (\ref{xsss}) apply. 
We will set $s_2=1$, then 
the excluded region, defined by the condition $g^{ab} \nabla_a \p_b g\geq 0$ is 
\begin{equation}
X_g = \left\{ (t,r,\theta,\phi) \; | \; r_- <r<r_+, \cot(\theta) >\frac{M-r}{ \sqrt{2Mr-a^2-r^2)}}  \right\}
\end{equation}
To analyze the effect of the rotation parameter 
$a$ note that for $r<M$ the condition on $\theta$ becomes more restrictive for larger $a$, whereas for 
$M<r<r_+$ becomes less restrictive. In any case, as $r \to r_+$ from the left, 
the whole range of $\theta$ is allowed, as happens in the Schwarzschild case.

\section{Acknowledgements} 
I thank Gregory Galloway for clarifications regarding his results in \cite{Galloway1,Galloway2} 
and Martin Dotti for his encouragement in difficult times. A  comment made by a  referee on a previous version of the 
manuscript hinted  to the current proof of the main results, which is shorter and self-contained.

\appendix

\section{Notation}

By a \textit{spacetime} $M$ we mean an $n+1$ dimensional Lorentzian  manifold, $n \geq 2$,
 which we assume oriented and time oriented. We use the mostly plus signature. A null vector is a zero norm nonzero vector. 
A \textit{null hypersurface} is an $n$ dimensional submanifold $N \subset M$ with degenerate induced metric. 
It can be shown that $N$ is a bundle of null geodesics,  called \textit{generators} of $N$. We use $l^a$ for a future  affine tangent 
to the generators of $N$. In Lemma \ref{divL}, we use a  vector field extension $\mathcal{L}^a$ 
 of $l^a$ away from $N$ which is null everywhere (in general, $\mathcal{L}^b \nabla_b \mathcal{L}^a=0$ only on points of $N$). 
 A \textit{surface} $S \subset M$ is a codimension 2 
submanifold with positive definite induced metric. $\ell^a$ and $\mathcal{k}^a$ are two future null vector fields on the normal bundle $(TS)^\perp$,  
normalized as $\mathcal{k}^a \ell_a=-1$; they are unique up to flipping and re-scaling $\ell^a \to \lambda \ell^a$, 
$\mathcal{k}^a \to \lambda^{-1} \mathcal{k}^a$, $\lambda$ a positive function on $S$. We call $N_{\ell}^S$ the bundle of null geodesics 
from $S$ with initial condition $\ell^a$ and $l^a$ the tangent to these geodesics (then $l^a|_S=\ell^a$). 
$N_{\ell}^S$ is a null hypersurface (near $S$). 
$N_{\mathcal{k}}^S$  and $k^a$ are defined analogously.   $\{e_i, i=1,2,...,n-1\}$ is a (possibly orthonormal) local basis of vector fields 
of $TS$.  In the proof of Theorem \ref{main}, $\{E_i, L^a, K^a, i=1,2,...,n-1\}$ 
is a pseudo-orthonormal basis of vector fields defined on an open neighborhood $O \subset M$ of $ p \in S$, 
 which agrees with $\{e_i, \ell^a, \mathcal{k}^a, i=1,2,...,n-1\}$ on $S$.

\end{document}